\title{Spectrum Leasing and Cooperative Resource Allocation in Cognitive OFDMA Networks}
\author{Meixia Tao and Yuan Liu
\thanks{Manuscript received March 6, 2012; revised June 30, 2012; accepted September 2, 2012; approved for publication by Prof. Young-June Choi.}
\thanks{This work is supported by the Innovation Program of Shanghai Municipal Education Commission under grant 11ZZ19, and the Program for New Century Excellent Talents in University (NCET) under grant NCET-11-0331.}
\thanks{The authors are with the Department of
Electronic Engineering at Shanghai Jiao Tong University, Shanghai,
200240, P. R. China. Emails: \{mxtao, yuanliu\}@sjtu.edu.cn.}
}
\begin{document}
\maketitle

\newtheorem{lemma}{Lemma}
\newtheorem{remark}{Remark}
\newtheorem{proposition}{Proposition}

\vspace{-1.2cm}
\begin{abstract}
This paper considers a cooperative OFDMA-based cognitive radio network where the primary system leases some of its subchannels to the secondary system for a fraction of time in exchange for the secondary users (SUs) assisting the transmission of primary users (PUs) as relays. Our aim is to determine the cooperation strategies among the primary and secondary systems so as to maximize the sum-rate of SUs while maintaining quality-of-service (QoS) requirements of PUs. We formulate a joint optimization problem of PU transmission mode selection, SU (or relay) selection, subcarrier assignment, power control, and time allocation. By applying dual method, this mixed integer programming problem is decomposed into parallel per-subcarrier subproblems, with each determining the cooperation strategy between one PU and one SU. We show that, on each leased subcarrier, the optimal strategy is to let a SU exclusively act as a relay or transmit for itself. This result is fundamentally different from the conventional spectrum leasing in single-channel systems where a SU must transmit a fraction of time for itself if it helps the PU's transmission. We then propose a subgradient-based algorithm to find the asymptotically optimal solution to the primal problem in polynomial time.  Simulation results demonstrate that the proposed algorithm can significantly enhance the network performance.
\end{abstract}

\begin{keywords}
Cooperative communications, cognitive radio networks, orthogonal frequency-division multiple-access (OFDMA), resource allocation, two-way relaying.
\end{keywords}

\section{Introduction}
\setlength\arraycolsep{2pt}

Cognitive radio (CR), with its ability to sense unused frequency bands and
adaptively adjust transmission parameters, has recently attracted
considerable interest for solving the spectrum scarcity problem
\cite{Haykin,Akyildiz}. A key concept in cognitive radio networks
(CRNs) is opportunistic or dynamic spectrum access, which allows
secondary users (SUs) to opportunistically access the bands licensed to
primary users (PUs).
%
%
Most of the works on dynamic spectrum access regard the secondary
transmission as harmful interference and hence the SUs do not
participate in the primary transmission.
%
%
%
Recently, a new cooperation strategy between the primary system and the secondary system was proposed in\cite{Simeone}
and further investigated in \cite{Zhang}. Therein, the PU link leases its channel to the SUs for a
fraction of time to transmit secondary traffic in exchange for the SUs acting as relays to assist the transmission of primary traffic.
%
The spectrum-leasing based cooperation can improve the performance
of both the primary and secondary systems and result in a ``win-win"
situation.

The early works  \cite{Simeone} and \cite{Zhang} on spectrum leasing only investigated the time slot allocation in the
single-PU, multi-SU, and single-channel scenario.
%
More specifically, in \cite{Simeone}, one PU targets at maximizing
its rate while multiple SUs compete with each other to access the
single channel. However, this scheme may result in an extreme case
that the PU is aggressive and the SUs have no opportunity to access
the channel. Recall that in CRNs, PUs are willing to share the
spectrum resource with SUs if their quality-of-service (QoS)
requirements are satisfied \cite{Haykin,Akyildiz}. In \cite{Zhang},
the PU maximizes its utility in terms of rate and revenue while the
SUs competitively make decisions based on their rates and payments.
Nevertheless, the virtual payment and revenue may lead to another
extreme case that the PU provides all of the transmission time to
the SUs on the single channel, which is not practical in CRNs.

In this paper, we consider the general spectrum leasing and resource
allocation problem in multi-channel multi-user CRN based on
orthogonal frequency-division multiple-access (OFDMA). The
motivation of using OFDMA is two-fold. First, OFDMA is not only
adopted in many current and next generation wireless standards but
also a strong candidate for CRNs \cite{Weiss}. The second is that
OFDMA-based systems can flexibly incorporate dynamic resource
allocations in CRNs (e.g., \cite{Wang,Bansal,Ma}). The primary
system consists of multiple user pairs conducting bidirectional
communication. The secondary system is a cellular network consisting
of a base station (BS) and a set of SUs.
The two systems operate in a cognitive and cooperative manner by
allowing the SUs to occupy certain subcarriers given that the QoS of
the PUs are satisfied with the assistance of SUs as cooperative
relays.

As CRNs are typically hierarchical and heterogeneous, it is
intuitive that if SUs can aggressively help PUs' transmission, then
less subcarriers will be needed by the PUs to satisfy their QoS
requirements, and as a result the SUs can access more subcarriers
for maximizing their own data rates.
%
%
Meanwhile, as the communication in the primary system is
bidirectional, the cooperation of SU as relays can also bring
network coding gain in the form of two-way relaying\footnote{In two-way
relay systems, a pair of nodes exchange information with the help
of a relay node using physical layer network coding \cite{Rankov,Popovski,Kim}. Two-way relaying can achieve much higher spectral
efficiency than the traditional one-way relaying.}. Thus,
more subcarriers can be leased to the SUs. The increased
spectral efficiency is in turn transformed into cooperation
opportunities.
Optimizing such cooperative CRN has unique attractiveness and
challenges as follows.

Firstly, for the primary system, when relaying is necessary, it has
to decide which cooperative transmission modes (one-way relaying and
two-way relaying) to select and which set of SUs to choose, since it
has higher priority in a CRN.
Secondly, for the secondary system, it needs to schedule appropriate
SUs to utilize the leased subcarriers for maximizing its total
throughput. Moreover, for those SUs that not only be selected as
relays but also be scheduled to transmit for themselves, the
secondary system needs to balance their resource utilization.
Thirdly, from the common perspective of the primary and secondary
systems, it is crucial to determine which set of subcarriers to
cooperate on together with how much power and time slots to transmit
signals, in order to satisfy the QoS requirements of the
primary system.

The main contributions of this paper are summarized as follows:
\begin{enumerate}
\item We propose an optimization framework for joint bidirectional transmission mode selection, SU selection, subcarrier assignment, power control, and time slot allocation in
the cooperative CRNs. The objective is to maximize the sum-rate of
all SUs while satisfying the individual rate requirement for each of
the PUs.
There are three distinct features in our optimization framework.
\emph{First}, by subcarrier assignment and allocating time slot
between PUs and SUs in cooperation sessions, multiuser diversity can
be achieved in both frequency domain and time domain. \emph{Second},
as the communication in the primary system is bidirectional, we can
exploit network coding gain in the form of two-way relaying to
improve spectral efficiency via the SUs' assistance. \emph{Third},
using the OFDMA-based relaying architecture, each PU pair can
conduct the bidirectional communication by multiple transmission
modes, namely direct transmission, one- and two-way relaying, each
of them can take place on a different set of subcarriers.

\item We show that in the multi-channel cooperative CRNs, the optimal strategy is to let
a SU exclusively act as a relay for a PU or transmit data for
itself on a cooperated channel. This result fundamentally differs
from the conventional cooperation in the single-channel scenario
where a SU must transmit a fraction of time for itself if it
forwards the PU's transmission on the channel.

\item Using the Lagrange dual decomposition method, the joint
optimization problem is decomposed into parallel
per-subcarrier-based subproblems. An efficient algorithm is proposed
to find the asymptotically optimal solution in polynomial time.
\end{enumerate}

The remainder of this paper is organized as follows. Section II
introduces the optimization framework, including system model and
problem formulation. Section III presents the details of the
Lagrange dual decomposition method for the joint resource-allocation
problem. Section IV provides the simulation results. Finally, we
conclude this paper in Section V.

\section{Optimization Framework}

We consider an OFDMA-based CRN where the primary system coexists
with the secondary system as shown in Fig.~\ref{fig:system}. The
primary system is an ad hoc network, consisting of multiple user pairs with each user pair conducting bidirectional communications.
The secondary system of
interest is the uplink of a single-cell network where a BS
communicates a set of SUs. Note that the downlink can be analyzed in
the same way.
The proposed model can be justified in the IEEE 802.22 standard,
where the CR systems are based on cellular basis.
By taking advantage of the parallel OFDMA-based relaying
architecture, each PU pair can conduct the bidirectional
communication through three transmission modes, namely direct
transmission, one- and two-way relaying, on different sets of
subcarriers. As shown in Fig.~\ref{fig:slots}, the PUs can transmit
directly and the SUs can access the PUs' residual subcarriers, or
they transmit by a cooperation manner. On each cooperated
subcarrier, a SU can assist a PU (or PU pair) using one- or two-way
relaying.
This setup can fully explore available diversities of the network,
including user, channel, and transmission mode.

\begin{figure}[t]
\begin{centering}
\includegraphics[scale=.5]{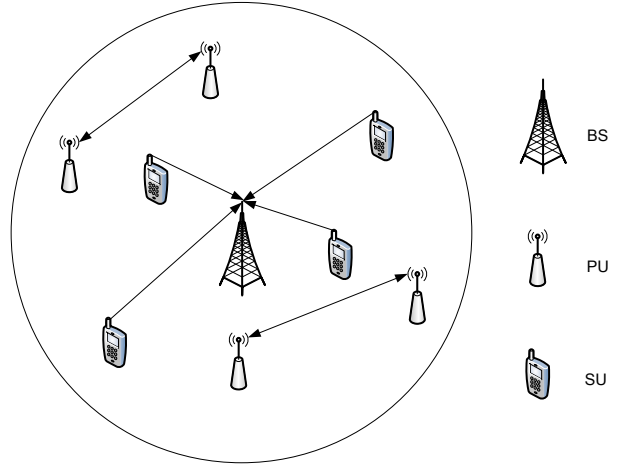}
\vspace{-0.1cm}
 \caption{System architecture of the CRN.}\label{fig:system}
\end{centering}
\vspace{-0.3cm}
\end{figure}
\begin{figure}[t]
\begin{centering}
\includegraphics[scale=.6]{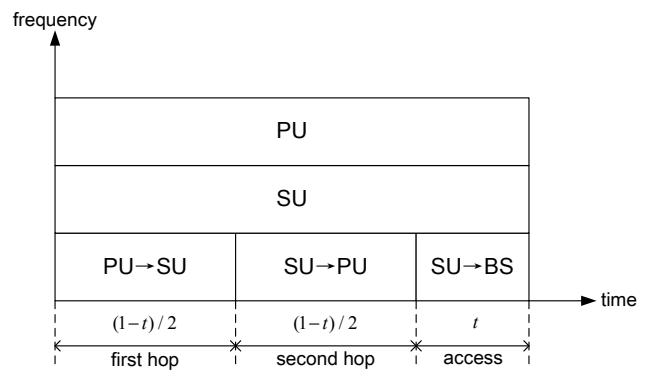}
\vspace{-0.1cm}
 \caption{Time slot allocation between a PU and a SU. PU and SU can transmit directly (on different subcarriers) or by a cooperation manner.}\label{fig:slots}
\end{centering}
\vspace{-0.3cm}
\end{figure}

We model the wireless fading environment by large-scale path loss
and shadowing, along with small-scale frequency-selective fading.
The channels between different links experience independent fading
and the network operates in slow fading environment, so that channel
estimation is perfect.
We assume that the two-hop transmission uses the same subcarrier for
both links, i.e., the source$\rightarrow$relay link and the
relay$\rightarrow$destination link.
%
The time slot allocation between a PU and a SU on a cooperated
subcarrier is illustrated in Fig.~\ref{fig:slots}, in which we
further assume that the two hops of the cooperative transmission use
equal time slots. This is true for amplify-and-forward (AF) relaying
strategy because AF needs equal time allocation, but more
flexibility can be provided if the two hops pursue time adaptation
for decode-and-forward (DF). Nevertheless, we still adopt equal time
slot allocation between the two hops for simplicity.

Let $ {\cal{N}}=\{1,2,\cdot\cdot\cdot,N\}$ denote the set of
subcarriers and ${\cal{K}}=\{1, \cdots, k, \cdots, K\}$ denote the
set of users, with the first $K_P$ being the PU pairs and the
remaining $K_S=K-K_P$ being SUs. Here $k$ represents PU pair index
if $1\leq k\leq K_P$ and represents SU index if $K_P+1\leq k\leq K$.
Denote $k_1$ and $k_2$ as the two users in the $k$-th PU pair,
$1\leq k\leq K_P$.
Denote
$\boldsymbol P_n=[P_{1,n},\cdot\cdot\cdot,P_{k,n},\cdot\cdot\cdot,P_{K,n}]^T$
and
$\boldsymbol{R}_n=[R_{1,n},\cdot\cdot\cdot,R_{k,n},\cdot\cdot\cdot,R_{K,n}]^T$
as the power and achievable rate vectors on subcarrier $n$,
respectively. If $1\leq k\leq K_P$,
$P_{k,n}=[P_{k_1,n},P_{k_2,n}]^T$ and
$R_{k,n}=[R_{k_1,n},R_{k_2,n}]^T$.
For interference avoidance, at most one PU (or PU pair) and one SU
are active on each subcarrier.
%
This exclusive subcarrier assignment and best relay selection can be
implicitly involved in $\boldsymbol P_n$ and $\boldsymbol{R}_n$.
Let $\boldsymbol{P^{{\rm max}}}=[P_1^{{\rm
max}},\cdot\cdot\cdot,P_k^{{\rm max}},\cdot\cdot\cdot,P_K^{{\rm
max}}]^T$ denote the peak power constraints vector. Again, if $1\leq
k\leq K_P$, $P_k^{{\rm max}}=[P_{k_1}^{{\rm max}},P_{k_2}^{{\rm
max}}]^T$. Let
$\boldsymbol{r}=[r_1,\cdot\cdot\cdot,r_k,\cdot\cdot\cdot,r_{K_P}]^T$
(with $r_k=[r_{k_1},r_{k_2}]^T$) be the rate requirements of the
PUs. Denote
$\boldsymbol{t}_n=[t_{K_P+1,n},\cdot\cdot\cdot,t_{k,n},\cdot\cdot\cdot,t_{K,n}]^T$
whose element $0\leq t_{k,n}\leq 1$ is the duration that SU $k$
transmits on subcarrier $n$. Note that as aforementioned there is at
most one SU active on a subcarrier, thus at most one non-zero
element in $\boldsymbol{t}_n$.
%
%
Without loss of generality, we assume that additive white noises at
all nodes are independent circular symmetric complex Gaussian random
variables, each of them has zero mean and unit variance.
Assuming channel reciprocity in time-division duplex, we then use
$|h_{k_1,k_2,n}^p|^2$, $|h_{k',{\rm BS},n}^s|^2$, and
$|h_{k_1,k',n}^{ps}|^2$ ($|h_{k_2,k',n}^{ps}|^2$) to represent the
effective channel gains between PU $k_1$ and $k_2$ of PU pair $k$,
SU $k'$ and BS, and PU $k_1$ ($k_2$) and SU $k'$, respectively, on
subcarrier $n$. For brevity, we denote all of them as a vector
$\boldsymbol{H}_n$.
A PU can cooperate with multiple SUs and a SU can assist multiple
PUs. Thanks to the use of OFDMA, the inter-user interference can be
avoided. In addition, the intra-pair interference for the PU pairs
will be treated as back-propagated self-interference and canceled
perfectly after two-way relaying. Finally, we let
$\boldsymbol P=[\boldsymbol P_1,\cdot\cdot\cdot,\boldsymbol P_n,\cdot\cdot\cdot,\boldsymbol P_N]^T$,
$\boldsymbol{R}=[\boldsymbol{R}_1,\cdot\cdot\cdot,\boldsymbol{R}_n,\cdot\cdot\cdot,\boldsymbol{R}_N]^T$,
and
$\boldsymbol{t}=[\boldsymbol{t}_1,\cdot\cdot\cdot,\boldsymbol{t}_n,\cdot\cdot\cdot,\boldsymbol{t}_N]^T$
be the power, achievable rate, and time slot allocation matrices,
respectively.

In this paper, our objective is not only to optimally allocate
power, subcarriers, and time slot but also to choose best
transmission modes and relays for the PUs so as to maximize
the sum-rate of all SUs while satisfying the individual rate
requirement for each of the PUs. Mathematically, the optimization
problem can be formulated as

\begin{subequations}\label{eqn:joint}
\begin{align}
\max_{\boldsymbol P,\boldsymbol{R},\boldsymbol{t}}&\sum_{k=K_P+1}^K
\sum_{n=1}^N R_{k,n}\\
{\rm s.t.}~~& \sum_{n=1}^NP_{k,n}\leq P_k^{{\rm max}},~~\forall k \label{eqn:peak_power}\\
& \sum_{n=1}^N R_{k,n}\geq r_k,~~1\leq k\leq K_P \label{eqn:qos}\\
& \boldsymbol P\succeq 0,~\boldsymbol{t}\in[0,1] \label{eqn:positive}\\
& R_{k,n}\in R \left(\boldsymbol P_n,\boldsymbol{t}_n,\boldsymbol{H}_n\right),~1\leq
k\leq K_P,\forall n\\
& R_{k,n}=t_{k,n}C\left(\frac{P_{k,n}|h_{k,{\rm
BS},n}^s|^2}{\sigma_{\rm
BS}^2}\right),K_P+1\leq k\leq K,\forall n,\label{eqn:region}
\end{align}
\end{subequations}
where $C(x)=\log_2(1+x)$, $\sigma_{\rm BS}^2$ is the noise variance at the BS, and $  {R}$ is the set of achievable rates for the PUs,
which is related to $\boldsymbol P_n$, $\boldsymbol{t}_n$,
$\boldsymbol{H}_n$, and the transmission modes. Note again that the
exclusive subcarrier assignment and best relay selection are
implicitly involved in $\boldsymbol P_n$, $\boldsymbol{R}_n$, and
$\boldsymbol{t}_n$.

\begin{remark}
In this paper, we assume that a central controller is available, so
that the network channel state information and sensing results can
be reliably gathered for centralized processing. Notice that the
centralized CRNs are valid in IEEE 802.22 standard \cite{Stevenson},
where the cognitive systems operate on a cellular basis and the
central controller can be embedded with a base station (BS).
This assumption is also reasonable if a spectrum broker
exists in CRNs for managing
spectrum leasing and access \cite{Brik,Jia}.
Such centralized approach is commonly used in a variety of CRNs
(e.g.,
\cite{Sun,Hossain,Hoang,Zhao,Brik,Jia,Wang,Bansal,Ma,Kang}). Compared with distributed approaches (e.g., \cite{Simeone,Zhang,Yi}), a CRN having a central manager that
possesses detailed information about the wireless network enables
highly efficient network configuration and better enforcement of a
complex set of policies \cite{Brik}.
\end{remark}

\begin{remark}
For CRNs, there is no single figure of QoS merit to measure the
performance of the primary system. In this paper, we choose the rate
requirement as the QoS metric. Other QoS metrics, like outage
probability and signal-to-interference-plus-noise ratio (SINR), can
be easily accommodated in the problem formulation.
Moreover, the weighted sum-rate maximization for the SUs can be
taken into account for the fairness issue, which does not affect the
proposed algorithms in the sequel.
\end{remark}

\section{Lagrange Dual Decomposition Based Optimization}

The optimization problem in (\ref{eqn:joint}) is a mixed integer
programming problem. In \cite{YuTCOM}, the authors show that for the
nonconvex resource optimization problems in OFDMA systems,  the
duality gap becomes zero under the time-sharing condition. It is
also proved in\cite{YuTCOM} that the time-sharing condition is
always satisfied as the number of OFDM subcarriers goes to infinity,
regardless of the nonconvexity of the original problem. This means
that solving the original problem and solving its dual problem are
 equivalent.
Based on the result, the Lagrange dual decomposition method is
recently applied to OFDMA-based cellular and cognitive radio
networks in \cite{Yu} and \cite{Kang}, respectively.
%
In this section, we shall apply the result from \cite{YuTCOM} to
solve our problem in (\ref{eqn:joint}). In particular, some valuable
insights are obtained for multi-channel cooperative CRNs, which
shows that the generalization from the single-channel
case\cite{Simeone,Zhang} to the multi-channel case is nontrivial.
%
%
%
%

%
We first introduce two sets of dual variables,
$\boldsymbol{\lambda}=[\lambda_1,\cdot\cdot\cdot,\lambda_k,\cdot\cdot\cdot,\lambda_K]^T$
($\lambda_k=[\lambda_{k_1},\lambda_{k_2}]^T$ if $1\leq k\leq K_P$)
and
$\boldsymbol{\beta}=[\beta_1,\cdot\cdot\cdot,\beta_k,\cdot\cdot\cdot,\beta_{K_P}]^T$
($\beta_k=[\beta_{k_1},\beta_{k_2}]^T$) associated with constraints
(\ref{eqn:peak_power}) and (\ref{eqn:qos}) respectively, where
$\boldsymbol{\lambda}\succeq 0$ and $\boldsymbol{\beta}\succeq 0$.
The Lagrange of the problem in (\ref{eqn:joint}) can be written as

\begin{eqnarray}
&&L(\boldsymbol P,\boldsymbol{R},\boldsymbol{t},\boldsymbol{\lambda},\boldsymbol{\beta})=\sum_{k=K_P+1}^K\sum_{n=1}^N
R_{k,n}\nonumber\\ &&+ \sum_{k=1}^K\lambda_k\left(P_k^{{\rm
max}}-\sum_{n=1}^NP_{k,n}\right) +
\sum_{k=1}^{K_P}\beta_k\left(\sum_{n=1}^N R_{k,n}- r_k\right).
\end{eqnarray}
Define $  {D}$ as the set of all primary variables
$\{\boldsymbol P,\boldsymbol{R},\boldsymbol{t}\}$ that satisfy
constraints (\ref{eqn:positive})-(\ref{eqn:region}). The dual
function is given by

\begin{equation}\label{eqn:dualf}
g(\boldsymbol{\lambda},\boldsymbol{\beta})=\max_{\{\boldsymbol P,\boldsymbol{R},\boldsymbol{t}\}\in
{D}}L(\boldsymbol P,\boldsymbol{R},\boldsymbol{t},\boldsymbol{\lambda},\boldsymbol{\beta}),
\end{equation}
and the dual optimization problem can be expressed as
\begin{subequations}\label{eqn:dual}
\begin{align}
\min_{\boldsymbol{\lambda},\boldsymbol{\beta}} ~~&
g(\boldsymbol{\lambda},\boldsymbol{\beta}) \\
{\rm s.t.} ~~~&
\boldsymbol{\lambda}\succeq0,\boldsymbol{\beta}\succeq0.
\end{align}
\end{subequations}

The dual function (\ref{eqn:dualf}) can be rewritten as
\begin{equation}
g(\boldsymbol{\lambda},\boldsymbol{\beta})=\sum_{n=1}^N
g_n(\boldsymbol{\lambda},\boldsymbol{\beta}) + \sum_{k=1}^K\lambda_k
P_k^{{\rm max}} - \sum_{k=1}^{K_P}\beta_k r_k,
\end{equation}
where
\begin{eqnarray}\label{eqn:per}
g_n(\boldsymbol{\lambda},\boldsymbol{\beta})=\max_{\{\boldsymbol P,\boldsymbol{R},\boldsymbol{t}\}\in
{D}}\Bigg[\sum_{k=K_P+1}^KR_{k,n} + \sum_{k=1}^{K_P}\beta_kR_{k,n}\nonumber\\ -
\sum_{k=1}^K\lambda_kP_{k,n}\Bigg]
\end{eqnarray}
are the $N$ independent per-subcarrier-based optimization
subproblems.

Since a dual function is always convex by definition\cite{Boyd},
subgradient-based ellipsoid method \cite{ellipsoid} can be used to
minimize $g(\boldsymbol{\lambda},\boldsymbol{\beta})$ by updating
$\{\boldsymbol{\lambda}, \boldsymbol{\beta}\}$ simultaneously along
with appropriate search directions, and it is guaranteed to converge
to the optimal solution $\{\boldsymbol{\lambda}^*, \boldsymbol{\beta}^*\}$.

\begin{proposition}\label{pro:gradients}
For the dual problem defined in (\ref{eqn:dual}),

\begin{equation}
\triangle\lambda_k=P_k^{{\rm max}}-\sum_{n=1}^NP_{k,n}^*,~1\leq k
\leq K,
\end{equation}
and
\begin{equation}
\triangle\beta_k=\sum_{n=1}^N R_{k,n}^*- r_k,~1\leq k \leq K_P,
\end{equation}
are subgradients of $g(\boldsymbol{\lambda},\boldsymbol{\beta})$,
where $\{P_{k,n}^*,R_{k,n}^*\}$ are the optimal solutions of
(\ref{eqn:per}) for given $\{\boldsymbol{\lambda},
\boldsymbol{\beta}\}$.
\end{proposition}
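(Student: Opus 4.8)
The plan is to invoke the standard fact that a pointwise maximum of functions that are affine in a parameter has, at each point, a subgradient given by the ``slope'' of any maximizing member, and to make this concrete for the Lagrangian at hand. First I would fix an arbitrary feasible dual pair $\{\boldsymbol{\lambda},\boldsymbol{\beta}\}$ with $\boldsymbol{\lambda}\succeq 0$, $\boldsymbol{\beta}\succeq 0$, and let $\{\boldsymbol{P}^*,\boldsymbol{R}^*,\boldsymbol{t}^*\}\in D$ be a maximizer attaining $g(\boldsymbol{\lambda},\boldsymbol{\beta})$ in (\ref{eqn:dualf}) --- equivalently, the point obtained by collecting the optimizers of the per-subcarrier problems (\ref{eqn:per}). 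The structural observation that drives the argument is that the primal feasible set $D$, defined by (\ref{eqn:positive})--(\ref{eqn:region}), does not depend on $\{\boldsymbol{\lambda},\boldsymbol{\beta}\}$, so this same $\{\boldsymbol{P}^*,\boldsymbol{R}^*,\boldsymbol{t}^*\}$ remains feasible for every other dual pair.

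Next, for an arbitrary $\{\boldsymbol{\lambda}',\boldsymbol{\beta}'\}$ with $\boldsymbol{\lambda}'\succeq 0$, $\boldsymbol{\beta}'\succeq 0$, I would use the chain
\[
g(\boldsymbol{\lambda}',\boldsymbol{\beta}')=\max_{\{\boldsymbol{P},\boldsymbol{R},\boldsymbol{t}\}\in D}L(\boldsymbol{P},\boldsymbol{R},\boldsymbol{t},\boldsymbol{\lambda}',\boldsymbol{\beta}')\ \geq\ L(\boldsymbol{P}^*,\boldsymbol{R}^*,\boldsymbol{t}^*,\boldsymbol{\lambda}',\boldsymbol{\beta}'),
\]
and then exploit that $L$ is affine in $\{\boldsymbol{\lambda},\boldsymbol{\beta}\}$: subtracting $L(\boldsymbol{P}^*,\boldsymbol{R}^*,\boldsymbol{t}^*,\boldsymbol{\lambda},\boldsymbol{\beta})=g(\boldsymbol{\lambda},\boldsymbol{\beta})$ cancels the SU sum-rate term and leaves precisely
\[
\sum_{k=1}^K(\lambda'_k-\lambda_k)\Bigl(P_k^{{\rm max}}-\sum_{n=1}^N P_{k,n}^*\Bigr)+\sum_{k=1}^{K_P}(\beta'_k-\beta_k)\Bigl(\sum_{n=1}^N R_{k,n}^*-r_k\Bigr),
\]
that is, $\sum_k\triangle\lambda_k(\lambda'_k-\lambda_k)+\sum_k\triangle\beta_k(\beta'_k-\beta_k)$ with $\triangle\lambda_k$ and $\triangle\beta_k$ exactly as stated. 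Hence $g(\boldsymbol{\lambda}',\boldsymbol{\beta}')\geq g(\boldsymbol{\lambda},\boldsymbol{\beta})+\sum_k\triangle\lambda_k(\lambda'_k-\lambda_k)+\sum_k\triangle\beta_k(\beta'_k-\beta_k)$ for all dual pairs, which is the defining subgradient inequality for the convex function $g$.

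There is no genuinely hard step; the one point worth a word of care is the attainment of the maximizer $\{\boldsymbol{P}^*,\boldsymbol{R}^*,\boldsymbol{t}^*\}$, which holds because $D$ is compact and the objective of (\ref{eqn:dualf}) is continuous, and which is in any case produced explicitly by solving the $N$ subproblems (\ref{eqn:per}). I would also remark that the maximizer need not be unique, so the proposition exhibits one valid subgradient rather than a gradient --- which is exactly what the subgradient-based ellipsoid update of $\{\boldsymbol{\lambda},\boldsymbol{\beta}\}$ needs --- and that the argument uses neither convexity of the primal problem (\ref{eqn:joint}) nor the zero-duality-gap result of \cite{YuTCOM}; it rests solely on $L$ being affine in $\{\boldsymbol{\lambda},\boldsymbol{\beta}\}$ and maximized over a dual-independent set.
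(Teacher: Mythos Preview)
Your proof is correct and follows essentially the same route as the paper's Appendix~\ref{app:gradients}: lower-bound $g(\boldsymbol{\lambda}',\boldsymbol{\beta}')$ by the Lagrangian evaluated at the maximizer for $(\boldsymbol{\lambda},\boldsymbol{\beta})$, then use affinity of $L$ in the dual variables to extract the subgradient inequality. One small remark: $D$ is not actually compact, since $\boldsymbol{P}\succeq 0$ is the only constraint on the powers in (\ref{eqn:positive}); but as you yourself note, the maximizer is produced explicitly by the closed-form solutions of the per-subcarrier subproblems, so this side claim is unnecessary and the argument stands.
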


\begin{proof}
Please see Appendix~\ref{app:gradients}.
\end{proof}

%

As mentioned earlier, there are at most one PU (or PU pair), denoted
as $  P$, and one SU, denoted as $  S$, active on
a subcarrier. Here $  P$ and $  S$ also represent
the \emph{best} PU (or PU pair) and SU respectively, among all
possible users, that maximizes (\ref{eqn:per}) for a given
subcarrier $n$. This can be obtained by an exhaustive search. The
complexity is detailed later. Specifically, it needs to first
compute the optimal powers and rates for all users under all
transmission modes, then let one PU and/or SU under one transmission
mode that maximizes (\ref{eqn:per}) active on each subcarrier.
Therefore, the per-subcarrier problems in (\ref{eqn:per}) can be
alternatively expressed as

\begin{subequations}\label{eqn:per'}
\begin{align}
\max_{\boldsymbol P_n,\boldsymbol{R}_n,\boldsymbol{t}_n}~~&
R_{  S,n} + \beta_  PR_{  P,n}
-\lambda_{  P}P_{  P,n} - \lambda_{
S}P_{  S,n}\\
{\rm s.t.}~~~~~& P_{  P,n}\geq0,P_{  S,n}\geq0,0\leq t_{  S,n}\leq1\\
& R_{  P,n}\in  {R}\left(P_{
P,n},P_{  S,n},t_{
S,n},\boldsymbol{H}_n\right)\\
& R_{  S,n}=t_{  S,n}C\left(\frac{P_{
S,n}|h_{  S,{\rm BS},n}^s|^2}{\sigma_{\rm BS}^2}\right).
\end{align}
\end{subequations}

In what follows, for brevity of notation, the subscript $n$ in
(\ref{eqn:per'}) is omitted due to all $N$ per-subcarrier-based
subproblems having an identical structure. In addition, for direct
transmission and one-way relaying, it is observed that the
per-subcarrier optimization problem for the two links of a PU pair,
i.e., $  P_1\rightarrow  P_2$ and $
P_2\rightarrow  P_1$ (with or without relaying), has the
same structure and can be decoupled. Thus, for brevity, we only
consider here the $  P_1\rightarrow  P_2$ link as
an example. In fact, the per-subcarrier optimization for direct
transmission and one-way relaying needs to first compute the optimal
values of the objective function in (\ref{eqn:per'}) for both links
and then let one of them that has the maximum value active on the
subcarrier. Moreover, we let $\gamma=|h_{  P_1,
P_2,n}^p|^2$, $\gamma_1=|h_{  P_1,  S,n}^p|^2$,
$\gamma_2=|h_{  S,  P_2,n}^{ps}|^2$, and
$\gamma_s=|h_{  S,{\rm BS},n}^s|^2$.

\subsection{Direct Transmission}

In this transmission mode, either a PU or a SU occupies solely the given subcarrier. The per-subcarrier
optimization problem in (\ref{eqn:per'}) can be expressed as
\begin{subequations}\label{eqn:dir}
\begin{align}
\max_{P_{  P_1}\geq0,P_{  S}\geq0}~~& R_{
S}+\beta_{  P_2}R_{  P_2} -\lambda_{
P_1}P_{  P_1} -\lambda_{  S}P_{  S} \\
{\rm s.t.}~~~~~~&R_{  S}=C(P_{  S}\gamma_s)\\
&R_{  P_2}=C(P_{  P_1}\gamma).
\end{align}
\end{subequations}
Since the problem in (\ref{eqn:dir}) is
convex, by applying the Karush-Kuhn-Tucker (KKT) conditions
\cite{Boyd}, the optimal power allocations can be obtained as
\begin{equation}\label{eqn:dir1}
P_{  P_1}^*=\left(\frac{\beta_{  P_2}}{a
\lambda_{  P_1}} - \frac{1}{\gamma}\right)^+,
\end{equation}
and
\begin{equation}\label{eqn:s}
P_{  S}^*=\left(\frac{1}{a \lambda_{  S}} -
\frac{1}{\gamma_s}\right)^+,
\end{equation}
where $a=\ln2$ and $(x)^+=\max(0,x)$. For a given subcarrier, the
direct transmission further needs to compute the optimal values of
the objective function in (\ref{eqn:dir}) over one of $P_{
P_1}^*$ and $P_{  S}^*$ with the other being zero, and
then let one of them that has maximum value active. (\ref{eqn:dir1})
and (\ref{eqn:s}) show that the optimal power allocations are
achieved by multi-level water-filling. In particular, the water
level of each PU depends explicitly on its QoS requirement, and can
differ from one another. On the other hand, the water levels of all
SUs are the same.

\subsection{One-Way Relaying}

If relaying is required on a given subcarrier, a fraction of $1-t_  S$ time is
used by a PU to transmit the primary traffic with the help of a SU,
while the rest $t_  S$ time is assigned to the SU to transmit its own data.
In this paper, we focus on DF only for simplicity of presentation,
for both one-way relaying and two-way relaying.
Other relaying strategies are readily applicable to our
framework and algorithms. The detailed discussion is given later.

In case of DF one-way relaying, the per-subcarrier problem in
(\ref{eqn:per'}) can be rewritten as
\begin{subequations}\label{eqn:df}
\begin{align}
\max_{P_{  P_1},P_{  S},t_  S}~~&
R_{  S} + \beta_{  P_2}R_{  P_2}
-\lambda_{  P_1}P_{  P_1} - \lambda_{
S}P_{  S}\label{eqn:dfobj}\\
{\rm s.t.}~~~~~& P_{  P_1}\geq0, P_{  S}\geq0,
0\leq t_  S\leq1\\
& R_  S=t_  SC(P_  S\gamma_s)\label{eqn:factor1}\\
& R_{  P_2}=\frac{(1-t_
S)}{2}\min\left\{C(P_{
P_1}\gamma_1),C(P_{  P_1}\gamma + P_
S\gamma_2)\right\}.\label{eqn:factor2}
\end{align}
\end{subequations}

In (\ref{eqn:factor2}), the first term in the min-operation is the
achievable rate of the $  P_1\rightarrow  S$ link,
and the second term is the achievable rate by maximum ratio
combining between the $  S\rightarrow  P_2$ link
and $  P_1\rightarrow  P_2$ link.
The following proposition is established for determining the optimal
value of the time slot allocation variable $t_  S$.

\begin{proposition}\label{pro:time}
For each \emph{cooperated} subcarrier, a SU \emph{exclusively} acts
as a relay for cooperative transmission or transmits traffic for
itself.
\end{proposition}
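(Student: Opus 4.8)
The plan is to show that, on a subcarrier operating in one-way relaying mode, the objective (\ref{eqn:dfobj}) of the per-subcarrier problem (\ref{eqn:df}) is \emph{affine} in the time-sharing variable $t_{S}$ once the transmit powers are held fixed, so that an optimal value of $t_{S}$ can always be taken at an endpoint of $[0,1]$. First I would fix $P_{P_1}\ge 0$ and $P_{S}\ge 0$ and, substituting (\ref{eqn:factor1})--(\ref{eqn:factor2}), rewrite the objective as
\begin{equation}
f(P_{P_1},P_{S},t_{S}) = t_{S}\,c_1 + (1-t_{S})\,c_2 - \lambda_{P_1}P_{P_1} - \lambda_{S}P_{S},
\end{equation}
where $c_1 = C(P_{S}\gamma_{s})\ge 0$ and $c_2 = \frac{\beta_{P_2}}{2}\min\{C(P_{P_1}\gamma_1),\,C(P_{P_1}\gamma+P_{S}\gamma_2)\}\ge 0$ are independent of $t_{S}$. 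The structural fact being exploited is the equal-time-split assumption for the two relay hops: it is what makes $R_{P_2}$ in (\ref{eqn:factor2}) linear in $t_{S}$ rather than merely concave. Since $f$ is affine in $t_{S}$ on $[0,1]$, its maximum over $t_{S}$ is attained at $t_{S}=1$ when $c_1>c_2$, at $t_{S}=0$ when $c_1<c_2$, and at either endpoint when $c_1=c_2$.

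Second I would carry this through the joint maximization over the powers by swapping the order of maximization in (\ref{eqn:df}),
\begin{equation}
\max_{P_{P_1},P_{S},t_{S}} f \;=\; \max_{P_{P_1},P_{S}}\,\max_{t_{S}\in[0,1]} f \;=\; \max\{V_0,\,V_1\},
\end{equation}
where $V_0$ is the optimal value of (\ref{eqn:df}) with $t_{S}=0$ fixed (the SU devotes the cooperated subcarrier entirely to relaying, so $R_{S}=0$) and $V_1$ is the optimal value with $t_{S}=1$ fixed (the SU transmits only its own data, so $R_{P_2}=0$, whence optimally $P_{P_1}=0$ and the problem collapses to the direct-transmission water-filling (\ref{eqn:s})). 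Hence (\ref{eqn:df}) always admits an optimal solution with $t_{S}\in\{0,1\}$, and the optimal decision is simply to compare the two numbers: the SU acts as a pure relay if $V_0\ge V_1$ and transmits only for itself otherwise. When $c_1\neq c_2$ at the relevant power-optimum the endpoint is moreover the unique maximizer, so the dichotomy asserted in Proposition~\ref{pro:time} holds. The same reasoning will apply verbatim to the two-way relaying mode, since there too the PU pair's rate carries an overall factor $(1-t_{S})/2$ multiplying a quantity independent of $t_{S}$.

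I expect no serious obstacle; the argument rests on the single observation that $f$ is affine in $t_{S}$. The only points deserving a line of justification are the legitimacy of the max-interchange, the degenerate tie $c_1=c_2$ (handled by picking an endpoint), and the attainment of the inner maximum over powers, which follows from coercivity because the linear penalties $-\lambda_{P_1}P_{P_1}-\lambda_{S}P_{S}$ dominate the logarithmic rate terms at the dual points of interest (where $\lambda_{P_1},\lambda_{S}>0$). It is also worth emphasizing in the write-up \emph{why} this differs fundamentally from the single-channel setting of \cite{Simeone,Zhang}: there a SU insists on a strictly positive own-transmission fraction whenever it relays, whereas here the SU's transmission opportunity is already priced by the dual variable $\lambda_{S}$ and can be realized on other subcarriers, so splitting a cooperated subcarrier's time between relaying and self-transmission is never advantageous.
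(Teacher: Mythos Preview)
Your proposal is correct and takes essentially the same approach as the paper: both rest on the single observation that, for fixed powers, the objective (\ref{eqn:dfobj}) is affine in $t_{S}$, so an optimum can always be taken at $t_{S}\in\{0,1\}$. The paper merely phrases this as a proof by contradiction (assuming $0<t_{S}^*<1$ and exhibiting an endpoint that does at least as well, with $P_{P_1}$ set to $0$ in the $t_{S}=1$ case), whereas you give the equivalent direct argument via interchange of maxima.
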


\begin{proof}
The proposition means that the time slot allocation variable
$t_  S$ is binary, i.e., $t_  S^*\in\{0,1\}$,
which can be proved by contradiction.

Assume that the optimal solution of (\ref{eqn:df}) is $(t_
S^*,P_{  P_1}^*,P_  S^*)$ with $0<t_
S^*<1$. Next, we show that we can always find another better solution with $t_  S$ being binary.

Let us rewrite the objective function (\ref{eqn:dfobj}) as
\begin{eqnarray}
&&f(t_  S^*,P_{  P_1}^*,P_{  S}^*)=t_
S\Bigg[C(P_{  S}^*\gamma_s)\nonumber\\
&&-\beta_{P_2}\min\left\{\frac{1}{2}C(P_{
P_1}^*\gamma_1),\frac{1}{2}C(P_{  P_1}^*\gamma + P_{
S}^*\gamma_2)\right\}\Bigg]\nonumber\\
&& +\beta_{  P_2}\min\left\{\frac{1}{2}C(P_{
P_1}^*\gamma_1),\frac{1}{2}C(P_{  P_1}^*\gamma + P_{
S}^*\gamma_2)\right\}\nonumber\\
&&-\lambda_{  P_1}P_{  P_1}^* -
\lambda_{  S}P_{  S}^*.
\end{eqnarray}
%
%
%
If $C(P_  S^*\gamma_s)<\beta_{
P_2}\min\left\{\frac{1}{2}C(P_{
P_1}^*\gamma_1),\frac{1}{2}C(P_{  P_1}^*\gamma +
P_  S^*\gamma_2)\right\}$, we have
\begin{eqnarray}
 f(t_  S^*,P_{  P_1}^*,P_  S^*) &<& \beta_{
P_2}\min\left\{\frac{1}{2}C(P_{
P_1}^*\gamma_1),\frac{1}{2}C(P_{  P_1}^*\gamma +
P_  S^*\gamma_2)\right\}\nonumber\\
&-&\lambda_{
P_1}P_{  P_1}^* - \lambda_{  S}P_{
S}^*\nonumber\\
&=& f(0,P_{  P_1}^*,P_  S^*).
\end{eqnarray}
Similarly, if $C(P_
S^*\gamma_s)\geq\frac{\beta_{
P_2}}{2}\min\left\{C(P_{
P_1}^*\gamma_1),C(P_{  P_1}^*\gamma +
P_  S^*\gamma_2)\right\}$, we have
\begin{eqnarray}
f(t_  S^*,P_{  P_1}^*,P_  S^*) &\le&
C(P_{ S}^*\gamma_s^*)-\lambda_{
P_1}P_{  P_1}^* -\lambda_{  S}P_{  S}^* \nonumber \\
&<& C(P_{ S}^*\gamma_s^*)-\lambda_{  S}P_{  S}^* \nonumber \\
&=&f(1,0,P_  S^*).
\end{eqnarray}
These results contradict the assumption. This completes the proof.
\end{proof}

%
This proposition also holds for two-way relaying as discussed in the next subsection. The proof is similar and hence ignored.

Proposition~\ref{pro:time} significantly simplifies the
per-subcarrier optimization problem in (\ref{eqn:df}) \emph{without
loss of optimality} by an exhaustive search over $t_  S$.
Specifically, we set $t_  S=0$ and $t_  S=1$ to
compute the optimal values of (\ref{eqn:dfobj}), respectively, then
follow the one that has the maximum value.

The intuition is that, on a cooperated subcarrier (see
Fig.~\ref{fig:slots}), if the subcarrier condition on the
SU$\rightarrow$BS link is good but on the cooperative link is poor,
it is better that the PU leases the whole transmission time slot to
the SU. Otherwise, the SU completely devotes itself as a relay to
the PU. In other words, if a SU exclusively forwards a PU's traffic
on a subcarrier, the PU shall lease other subcarrier(s) to the SU as
remuneration. This \emph{channel-swap} based multi-channel
cooperation fundamentally
differs from the single-channel cooperation case
\cite{Simeone,Zhang} where if a SU forwards a PU's traffic, it must
benefit from the PU on the channel. The spectral efficiency
improvement brings more cooperation opportunities and leased
subcarriers, and thus, the total throughput of the secondary system
is increased.
In the following we consider $t_  S=0$ and $t_
S=1$, respectively.

Case 1: $t_  S=0$. In this case, $
S$ exclusively acts as a relay on a cooperated subcarrier. In DF
one-way relaying, it is intuitive that $R_{  P_2}$ is
maximized when $C(P_{  P_1}\gamma_1)=C(P_{
P_1}\gamma + P_  S\gamma_2)$, which leads to
\begin{equation}\label{eqn:ps}
P_  S=\gamma'P_{  P_1},
\end{equation}
where $\gamma'=(\gamma_1-\gamma)/\gamma_2$. It is noted that DF
occurs only if $\gamma_1> \gamma$.
Substituting (\ref{eqn:ps}) to (\ref{eqn:df}) and let $t_
S=0$, the problem can be rewritten as
\begin{subequations}
\begin{align}
\max_{P_{  P_1}\geq0}~~&\beta_{  P_2}R_{
P_2} - (\lambda_{
P_1}+\lambda_  S\gamma')P_{  P_1}\\
{\rm s.t.} ~~~&R_{  P_2}=\frac{1}{2}C(P_{
P_1}\gamma_1).
\end{align}
\end{subequations}
The above is a convex problem. By applying the KKT conditions, the
optimal power allocation is given by
\begin{eqnarray}\label{eqn:df1}
P_{  P_1}^*= \left[\frac{\beta_{  P_2}}{2a
(\lambda_{  P_1}+\lambda_  S\gamma')} -
\frac{1}{\gamma_1}\right]^+,
\end{eqnarray}
$P_  S^*$ can be obtained according to (\ref{eqn:ps}). The
above optimal power allocation (\ref{eqn:df1}) shows that higher
channel gain $\gamma_1$, meaning higher $\gamma'$, results in lower
water level, which is the extra feature compared with the standard
water-filling approach (e.g., (\ref{eqn:dir1}) and (\ref{eqn:s}) in
direct transmission). One also observes that lower channel gain
$\gamma_2$ leads to lower water level and vice versa.

Case 2: $t_  S=1$. In this case, $  S$ uses a cooperated subcarrier solely for
its own transmission.
%
The optimal power allocation can be easily obtained and is the same
as (\ref{eqn:s}).

\subsection{Two-Way Relaying}

The two-way communication between $  P_1$ and $
P_2$ assisted by $  S$ takes place in two phases.
Specifically, in the first phase, also known as multiple-access
(MAC) phase, $  P_1$ and $  P_2$ concurrently
transmit signals to the assisting $  S$. In the second
phase, known as broadcast (BC) phase, $  S$ broadcasts the
processed signals to both $  P_1$ and $  P_2$.
Different from direct transmission and one-way relaying, two-way
relaying must occur in pair
\cite{YuanTWC, Jitvanichphaibool, YuanTCOM,YuanWCL12,YuanGC12, YuanJSAC12}. Thus both the
bidirectional links are taken into account together for two-way
relaying.
Here we only analyze the case of $t_  S=0$, and the case of
$t_  S=1$ is omitted since the optimal power allocation is
the same as (\ref{eqn:s}) if $t_  S=1$.

The per-subcarrier problem in (\ref{eqn:per'}) can be expressed as
(recall that $t_  S=0$)
\begin{subequations}\label{eqn:xor}
\begin{align}
\max_{P_  P\geq0,P_  S\geq0,R_  P}
~~&\beta_{  P_1}R_{  P_1} + \beta_{
P_2}R_{  P_2}\nonumber\\
& - \lambda_{  P_1}P_{
P_1} - \lambda_{
P_2}P_{  P_2}-\lambda_  SP_  S\\
{\rm s.t.}~~~~~~~& R_  P\in   {R}\left(P_
P,P_  S,\gamma_1,\gamma_2\right)=
{C}_{\textrm{MAC}}\bigcap  {C}_{\textrm{BC}},
\end{align}
\end{subequations}
where $  {C}_{\textrm{MAC}}$ and $  {C}_{\textrm{BC}}$
are the capacity regions for the MAC and BC phases, respectively
\cite{Xie,Boche,Kim}. Specifically,

\begin{eqnarray}\label{eqn:mac}
  {C}_{\textrm{MAC}}&=&\Big\{[R_{  P_1}~R_{
P_2}]\Big| R_{  P_1}\leq \frac{1}{2}C(P_{
P_2}\gamma_2),R_{  P_2}\leq \frac{1}{2}C(P_{
P_1}\gamma_1),\nonumber
\\ && ~~~~~~~~~~~~~R_{P_1}+R_{  P_2}\leq \frac{1}{2}C(P_{
P_2}\gamma_2+P_{  P_1}\gamma_1)\Big\},
\end{eqnarray}
and
\begin{equation}\label{eqn:bc}
  {C}_{\textrm{BC}}=\Big\{[R_{  P_1}~R_{
P_2}]\Big| R_{  P_1}\leq \frac{1}{2}C(P_
S\gamma_1),R_{  P_2}\leq \frac{1}{2}C(P_
S\gamma_2)\Big\}.
\end{equation}
Note that the channel reciprocity is used in the BC phase, which is
justified by the time-division duplex mode. Since both $
{C}_{\textrm{MAC}}$ and $  {C}_{\textrm{BC}}$ are convex
sets, and so is their intersection, the problem in (\ref{eqn:xor})
is a convex problem and can be solved by convex techniques.

Let $\alpha_1$ and $\alpha_2$ be the two dual variables associated
with the two rate constraints in (\ref{eqn:bc}). We first
incorporate the two rate constraints in (\ref{eqn:bc}) into the
objective function and rewrite the Lagrange dual problem of
(\ref{eqn:xor}) as
\begin{subequations}\label{eqn:xor'}
\begin{align}
\min_{\alpha_1,\alpha_2}\max_{\{P_  P,R_
P\}\in  {C}_{\textrm{DF}}} & \beta_{
P_1}R_{  P_1} + \beta_{  P_2}R_{  P_2}\nonumber\\
&- \lambda_{  P_1}P_{  P_1} - \lambda_{
P_2}P_{  P_2}-\lambda_  SP_  S \nonumber\\
&-\alpha_1\left[R_{  P_1}-\frac{1}{2}C(P_
S\gamma_1)\right] \nonumber\\
&-\alpha_2\left[R_{
P_2}-\frac{1}{2}C(P_  S\gamma_2)\right]\\
{\rm s.t.} ~~~~~~~~~~& \alpha_1\geq0,\alpha_2\geq0,
\end{align}
\end{subequations}
where $  {C}_{\textrm{DF}}$ is the set of the remaining
constraints in (\ref{eqn:xor}) that $\{P_  P,R_
P\}$ must satisfy. The minimization over $\{\alpha_1,\alpha_2\}$
can be done using ellipsoid method with the fact that
$\frac{1}{2}C(P_  S\gamma_1)-R_{  P_1}$ and
$\frac{1}{2}C(P_  S\gamma_2)-R_{  P_2}$ are
subgradients of $\alpha_1$ and $\alpha_2$, respectively.
It is observed that the optimization variables in (\ref{eqn:xor'})
are separable. Therefore, the maximization over $\{P_
P,R_  P\}$ in (\ref{eqn:xor'}) can be decomposed into two
subproblems that can be solved separately. The two subproblems are
\begin{subequations}\label{eqn:sub1}
\begin{align}
\max_{P_  P\geq0,R_  P} & (\beta_{
P_1}-\alpha_1)R_{  P_1} + (\beta_{
P_2}-\alpha_2)R_{  P_2} \nonumber\\
& - \lambda_{
P_1}P_{  P_1} - \lambda_{  P_2}P_{
P_2}\\
{\rm s.t.}~~~& \{P_  P,R_  P\}\in
{C}_{\textrm{MAC}},
\end{align}
\end{subequations}
and
\begin{equation}\label{eqn:sub2}
\max_{P_  S\geq0}~~\frac{\alpha_1}{2}C(P_
S\gamma_1) + \frac{\alpha_2}{2}C(P_  S\gamma_2) -
\lambda_  SP_  S.
\end{equation}
For brevity of notation, we let $\alpha_1'=\beta_{
P_1}-\alpha_1$ and $\alpha_2'=\beta_{  P_2}-\alpha_2$. It
is noted that both $\alpha_1'$ and $\alpha_2'$ must be nonnegative,
i.e., $\beta_{  P_1}\geq\alpha_1$ and $\beta_{
P_2}\geq\alpha_2$. In the following we present the solution to
each subproblem.

The subproblem in (\ref{eqn:sub1}) is a classic resource allocation
problem in the Gaussian MAC \cite{Tse}, where the optimal power and
rate allocations can be achieved by successive decoding.
Specifically, users' signals are decoded one by one in an increasing
rate weight order \cite{Tse}. Without loss of generality, we assume
that $\alpha_1'\geq\alpha_2'$ (here $\alpha_1'$ and $\alpha_2'$ can
be regarded as the rate weights for $  P_1$ and $
P_2$, respectively). Based on the polymatroid structure of the
Gaussian MAC \cite{Tse}, we then incorporate the three rate
constraints in (\ref{eqn:mac}) into the objective function of
(\ref{eqn:sub1}), the subproblem in (\ref{eqn:sub1}) can be
expressed as
%
\begin{eqnarray}\label{eqn:sub1'}
\max_{P_  P\geq0}~~&&\frac{\alpha_1'}{2}C(P_{
P_2}\gamma_2)+\frac{\alpha_2'}{2}\left[C(P_{
P_2}\gamma_2+P_{  P_1}\gamma_1)-C(P_{
P_2}\gamma_2)\right]\nonumber\\
&&-\lambda_{  P_1}P_{
P_1}-\lambda_{  P_2}P_{  P_2}.
\end{eqnarray}
It is easy to validate that the objective function of
(\ref{eqn:sub1'}) is jointly concave in $P_{  P_1}$ and
$P_{  P_2}$. By applying the KKT conditions, the optimal
power allocations can be obtained as
\begin{equation}\label{eqn:tw-p2}
P_{
P_2}^*=\left[\frac{(\alpha_1'-\alpha_2')\gamma_1}{2a(\gamma_1\lambda_{
P_2}-\gamma_2\lambda_{  P_1})} -
\frac{1}{\gamma_2}\right]^+,
\end{equation}
and
\begin{equation}\label{eqn:tw-p1}
P_{
P_1}^*=\frac{1}{2a}\left[\frac{\alpha_2'}{\lambda_{
P_1}} -
\frac{(\alpha_1'-\alpha_2')\gamma_2}{\gamma_1\lambda_{
P_2}-\gamma_2\lambda_{  P_1}}\right]^+.
\end{equation}
It is observed that the optimal power allocations in the MAC phase
have the form of water-filling. Moreover, the following proposition
is provided according to the above closed-form power allocations.

\begin{proposition}\label{pro:twoway}
For $\alpha_1'\geq\alpha_2'$, a necessary condition for the
occurrence of two-way relaying is $\gamma_1\lambda_{
P_2}>\gamma_2\lambda_{  P_1}$.
\end{proposition}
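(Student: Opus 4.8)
The plan is to read off the claim directly from the closed-form power expressions \eqref{eqn:tw-p2} and \eqref{eqn:tw-p1}, which are valid under the standing assumption $\alpha_1'\geq\alpha_2'$. Two-way relaying \emph{occurs} on the subcarrier only if the MAC-phase solution actually transmits with nonzero power on at least one of the two primary links; that is, we need $P_{P_1}^*>0$ or $P_{P_2}^*>0$ (if both are zero, the PU pair sends nothing and there is no cooperative two-way transmission to speak of). So the goal reduces to showing that $\gamma_1\lambda_{P_2}\le\gamma_2\lambda_{P_1}$ forces $P_{P_1}^*=P_{P_2}^*=0$, which is the contrapositive of the stated necessary condition.

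First I would dispatch the equality case: if $\gamma_1\lambda_{P_2}=\gamma_2\lambda_{P_1}$, then the denominator $\gamma_1\lambda_{P_2}-\gamma_2\lambda_{P_1}$ vanishes. In \eqref{eqn:tw-p2} the term $(\alpha_1'-\alpha_2')\gamma_1/\bigl(2a(\gamma_1\lambda_{P_2}-\gamma_2\lambda_{P_1})\bigr)$ blows up; one should interpret \eqref{eqn:tw-p2}--\eqref{eqn:tw-p1} carefully by going back to the KKT stationarity conditions of \eqref{eqn:sub1'} rather than the division form, but the honest statement is that in this degenerate regime the objective \eqref{eqn:sub1'} is (weakly) decreasing along the relevant direction and the optimum sits at the boundary $P_{P_1}^*=0$, after which \eqref{eqn:tw-p1} (read as a limit / directly from KKT) gives $P_{P_2}^*=0$ as well. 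Then I would treat the strict case $\gamma_1\lambda_{P_2}<\gamma_2\lambda_{P_1}$: now the denominator is negative. In \eqref{eqn:tw-p2} the first bracketed term is $(\alpha_1'-\alpha_2')\gamma_1$ over a negative quantity, hence $\le 0$ since $\alpha_1'\ge\alpha_2'$ and all channel gains are positive; subtracting the further positive $1/\gamma_2$ keeps it negative, so the $(\cdot)^+$ clips it to $P_{P_2}^*=0$. Plugging $P_{P_2}^*=0$ back (or inspecting \eqref{eqn:tw-p1}): the second term $(\alpha_1'-\alpha_2')\gamma_2/(\gamma_1\lambda_{P_2}-\gamma_2\lambda_{P_1})$ is again nonpositive over a negative denominator... wait, it is $\ge 0$ after the minus sign makes it $-(\text{nonnegative})/(\text{negative})$ — I would check the sign bookkeeping once more here, but the upshot from the KKT analysis of \eqref{eqn:sub1'} with $P_{P_2}=0$ is that $P_{P_1}^*$ also clips to $0$. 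Hence no power is allocated in the MAC phase and two-way relaying cannot occur.

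The main obstacle is precisely that sign bookkeeping in \eqref{eqn:tw-p1} together with the degenerate denominator: the two displayed formulas are the unconstrained stationary points, and one must be careful that they are only the true optimizers inside the region where both $(\cdot)^+$ arguments are positive; outside it, the correct argument is that the concave objective \eqref{eqn:sub1'} is monotone in the feasible directions and the maximizer lies on the boundary of the nonnegative orthant. I would therefore phrase the proof as: assume for contradiction that two-way relaying occurs, i.e. $P_{P_1}^*>0$ or $P_{P_2}^*>0$; by joint concavity of \eqref{eqn:sub1'} and the KKT conditions, a positive component forces the corresponding bracketed expression in \eqref{eqn:tw-p2}--\eqref{eqn:tw-p1} to be strictly positive; combined with $\alpha_1'\ge\alpha_2'>0$ and positivity of $\gamma_1,\gamma_2$, this is only possible when $\gamma_1\lambda_{P_2}-\gamma_2\lambda_{P_1}>0$. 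This yields the necessary condition. A one-line intuition to append: $\gamma_1\lambda_{P_2}>\gamma_2\lambda_{P_1}$ says the link with the larger rate weight ($P_1$, via $\gamma_1$) is ``cheap enough'' relative to the other link's price, so that spending power on the primary pair is worthwhile at all.
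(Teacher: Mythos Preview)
Your proposal contains a genuine gap, and it stems from the premise you adopt for ``two-way relaying occurs.'' You take this to mean $P_{P_1}^*>0$ \emph{or} $P_{P_2}^*>0$, and then try to show the contrapositive, namely that $\gamma_1\lambda_{P_2}\le\gamma_2\lambda_{P_1}$ forces \emph{both} powers to vanish. This second implication is false. Suppose $\gamma_1\lambda_{P_2}<\gamma_2\lambda_{P_1}$. From \eqref{eqn:tw-p2} you correctly get $P_{P_2}^*=0$. But once $P_{P_2}^*=0$, the expression \eqref{eqn:tw-p1} (which was derived assuming an interior stationary point) is no longer the optimizer; you must re-solve \eqref{eqn:sub1'} with $P_{P_2}=0$, which reduces to maximizing $\tfrac{\alpha_2'}{2}C(P_{P_1}\gamma_1)-\lambda_{P_1}P_{P_1}$ and gives $P_{P_1}^*=\bigl[\alpha_2'/(2a\lambda_{P_1})-1/\gamma_1\bigr]^+$. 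This is strictly positive whenever $\alpha_2'\gamma_1>2a\lambda_{P_1}$, which has nothing to do with the sign of $\gamma_1\lambda_{P_2}-\gamma_2\lambda_{P_1}$. So your ``upshot from the KKT analysis of \eqref{eqn:sub1'} with $P_{P_2}=0$ is that $P_{P_1}^*$ also clips to $0$'' is simply wrong, and the self-flagged sign confusion in \eqref{eqn:tw-p1} is symptomatic of the same problem: with a negative denominator the bracketed term in \eqref{eqn:tw-p1} can be positive.

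The fix is to use the correct premise. In this paper, two-way relaying is declared to occur only when $P_S^*$, $P_{P_1}^*$, and $P_{P_2}^*$ are \emph{all} positive (see Remark~4: if only one of $P_{P_1}^*,P_{P_2}^*$ is positive, one direction is inactive and the situation is treated as one-way relaying). With that stronger hypothesis you need only look at \eqref{eqn:tw-p2}: $P_{P_2}^*>0$ forces the bracketed quantity to be positive, which under $\alpha_1'\ge\alpha_2'$ immediately requires $\gamma_1\lambda_{P_2}-\gamma_2\lambda_{P_1}>0$. The paper then checks, for completeness, that $P_{P_1}^*>0$ in \eqref{eqn:tw-p1} yields the even stronger inequality $\gamma_1\lambda_{P_2}/(\gamma_2\lambda_{P_1})>\alpha_1'/\alpha_2'\ge 1$, which again implies the stated necessary condition. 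No contrapositive or degenerate-denominator discussion is needed once the right hypothesis is in place.
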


\begin{proof}
Please see Appendix~\ref{app:twoway}.
\end{proof}

The subproblem in (\ref{eqn:sub2}) is also convex since its
objective function is concave in $P_  S$. By applying the
KKT conditions, the optimal power allocation is given by
\begin{eqnarray}\label{eqn:tw-ps}
P_
S^*=\begin{cases}\frac{-\theta_2+\sqrt{\theta_2^2-4\theta_1\theta_3}}{2\theta_1},~&\textrm{if}~\lambda_  S<\frac{\alpha_1\gamma_1+\alpha_2\gamma_2}{2a}\\
0,~&\textrm{otherwise}\end{cases},
\end{eqnarray}
where $\theta_1=2a\lambda_  S\gamma_1\gamma_2$,
$\theta_2=2a\lambda_
S(\gamma_1+\gamma_2)-\gamma_1\gamma_2(\alpha_1+\alpha_2)$, and
$\theta_3=2a\lambda_  S-\alpha_1\gamma_1-\alpha_2\gamma_2$.

%

\begin{remark}
For both one- and two-way relaying, direct transmission mode is
optimal if $P_{  S}^*=0$. In this case, the optimal power
and rate allocations are the same as those obtained in the direct
transmission mode in Section III-A. Moreover, it is noted that
two-way relaying occurs if $P_  S^*$, $P_{
P_1}^*$, and $P_{  P_2}^*$ are all positive. For two-way
relaying, another interesting case is that $P_{  S}^*$ is
positive and one of $P_{  P_1}^*$ and $P_{
P_2}^*$ is equal to zero. In other words, one direction is
inactive. In this case, one-way relaying must be optimal.
\end{remark}

\begin{remark}
  In our centralized framework, the cooperation is between the primary system
  and secondary system rather than among individual users. Thus some SUs may not
   transmit their own traffic because they are the best
  option for primary traffic relaying. In this case, these SUs may
  be re-scheduled for transmission at the next transmission frame by a higher layer scheduler for
  long-term fairness. However, analysis on higher layer scheduling
  is beyond of the scope of this paper.
\end{remark}

%

\begin{remark}
In this paper, we employ DF just for an illustration purpose. Other
relaying strategies, like AF and compress-and-forward (CF), are
generally applicable to our proposed framework and algorithms.
However, the achievable rate expressions of AF and CF are not
concave, for both one- and two-way relaying.\footnote{It is noted
that the nonconvexity does not affect Proposition~\ref{pro:time}.}
To overcome the difficulty, some approximations can be adopted for
AF and CF such that the achievable rate expressions are concave, and
thus they can be solved using convex optimization techniques.
%
%
\end{remark}

After obtaining the optimal solution in the dual domain, we now need
to obtain the optimal solution to the original primal problem in
(\ref{eqn:joint}). Due to the non-zero duality gap, the optimal
solution obtained in the dual domain may not satisfy all the
constraints in the original primal problem. To tackle this problem,
we first obtain the optimal transmission mode selection and
user-assignment for each subcarrier using the method in the dual
domain, then the primal problem in (\ref{eqn:joint}) reduces to a
pure power allocation problem and it is convex. By applying KKT
conditions, the optimal power allocations follow the same
expressions in the dual domain and the details are omitted here.
This approach is \emph{asymptotically} optimal due to the vanishing
duality gap when the number of subcarriers is sufficiently large
\cite{YuTCOM}.

At the end of the section, we analyze the computational complexity
of the proposed algorithm.
The complexity of determining $  P$ and $  S$ on
each subcarrier for direct transmission is $  {O}(2K_P+K_S)$,
and for one- and two-way relaying are $  {O}(2K_PK_S)$ and
$  {O}(K_PK_S)$, respectively. Note that the complexity of
the search over $t_  S=1$ is implicitly contained in the
optimization of direct transmission mode. Therefore, the total
complexity of solving all $N$ per-subcarrier problems is $
{O}\left(N(2K_P+K_S+3K_PK_S)\right)$. Combining the complexity of
the ellipsoid method, the total complexity of solving the dual
problem is $
{O}\left(N(2K_P+K_S+3K_PK_S)(4K_P+K_S)^2\right)$, which is linear in
$N$ and polynomial in $K_P$ and $K_S$.

\section{Simulation Results}

In this section, we evaluate the performance of the proposed
cooperative scheme using simulation.
The conventional scheme without cooperation is
selected as a benchmark, which corresponds to the optimization of
direct transmission in our proposed algorithm and its complexity is $
{O}\left(N(2K_P+K_S)(4K_P+K_S)^2\right)$. As
another benchmark, the performance of the Fixed Transmission Mode
(FTM) based allocation is also presented.
%
%
In particular, the FTM scheme lets the transmission mode for each PU
be pre-fixed according to nodes' geographical information, and other
optimizations are the same with the proposed optimal algorithm. This
is attractive for practical systems where path loss dominates the
performance of the network nodes. In specific, a PU is assigned to
the direct transmission mode if the path loss (or distance) of the
source-destination link is smaller than that of all source-relay
links and the cooperative transmission modes are used otherwise.
When it is assigned the cooperative transmission modes, two-way
relaying is adopted if the path losses of the source-relay link and
the relay-destination link is about the same, otherwise one-way
relaying is used (in this case, $  P_1\rightarrow
P_2$ direction is performed). Note that if a PU is assigned two-way relaying, the other
PU in the same pair is also assigned two-way relaying.
%
%
For those PUs who need SUs' assistance, we assign the nearest SU to
each PU, the search over the suitable SU for each PU is reduced to
$  {O}(1)$. Hence, the total complexity of this suboptimal
algorithm is $  {O}\left(N(5K_P+K_S)(4K_P+K_S)^2\right)$.

We consider an a primary network in a 1 km by 1 km square area and a
cellular secondary network whose BS is located in the center of the
square and with $1$ km radius. All users are randomly but uniformly
distributed. The statistical path loss model and shadowing are
referred to \cite{Erceg}, where we set the path loss exponent to be
$4$ and the standard deviation of log-normal shadowing to be $5.8$
dB. The small-scale fading is modeled by Rayleigh fading process,
where the power delay profile is exponentially decaying with maximum
delay spread of $5$ $\mu s$. A total of $2000$ independent channel
realizations were used. Different channel realizations are with
different node locations.
We set the number of OFDM subcarriers be $N=64$.  Without loss of
generality, we let all users have the same maximum power
constraints, and all PUs have the same rate requirements. In all of
the simulations, we fix $K_P=2$ PU pairs (or equivalently $4$ PUs)
in the network. Without loss of generality, we let all users have
the same peak power constraints.

\begin{figure}[t]
\begin{centering}
\includegraphics[scale=.6]{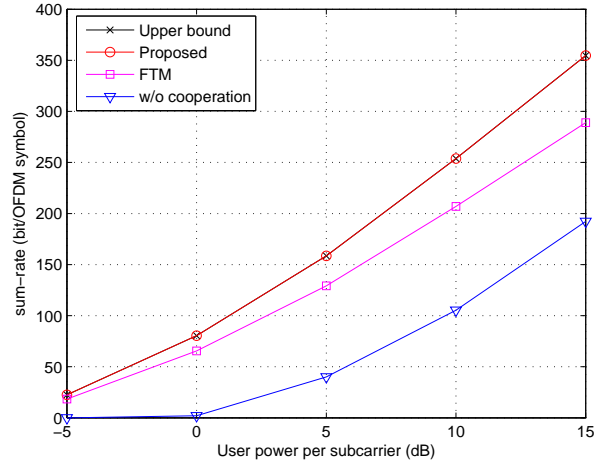}
\vspace{-0.1cm}
 \caption{Sum-rate of the secondary system versus transmit SNR per subcarrier, with $K_P=2$ PU pairs, $K_S=4$ SUs and rate requirement $5$ bit/OFDM symbol for all PUs.}\label{fig:4KS}
\end{centering}
\vspace{-0.3cm}
\end{figure}
\begin{figure}[t]
\begin{centering}
\includegraphics[scale=.6]{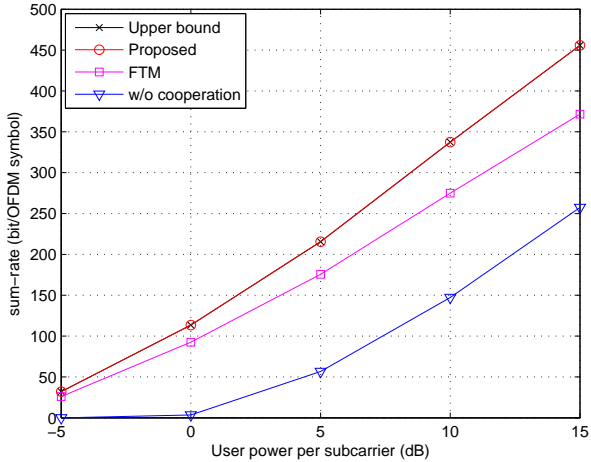}
\vspace{-0.1cm}
 \caption{Sum-rate of the secondary system versus peak power per subcarrier, with $K_P=2$ PU pairs, $K_S=8$ SUs and rate requirement $5$ bit/OFDM symbol for all PUs.}\label{fig:8KS}
\end{centering}
\vspace{-0.3cm}
\end{figure}

Figs.~\ref{fig:4KS} and \ref{fig:8KS} compare the sum-rate of the
secondary system versus user peak power (in dB) achieved by
different schemes when there are $K_S=4$ and $K_S=8$ SUs,
respectively. In both figures, the PU rate requirement is $5$
bits/OFDM symbol for each PU, and the dual optimum values serve as
the performance upper bounds. It is first observed that the proposed
cooperation scheme approaches the upper bound very tightly, which
verifies the effectiveness of the proposed algorithm.
One also observes that the proposed scheme outperforms the
conventional non-cooperative scheme by a significant margin. In
particular, compared with the conventional scheme, about $60\%$
throughput improvement is achieved in our proposed scheme. The
tremendous improvement is as the remuneration for cooperative
diversity, selection diversity, and network coding gain that the
secondary systems provides to the primary system. Second, one also
observes that our proposed scheme improves $20\%$ throughput over
the FTM scheme. This clearly suggests the benefits of bidirectional
transmission mode adaptation and SU selection for the PUs. Third,
from Fig.~\ref{fig:8KS} with $K_S=8$ SUs, it is observed that our
proposed scheme also outperforms the FTM and conventional schemes
substantially. Note that a larger $K_S$ results in higher
computational complexity mainly due to the updates of dual
variables.

\begin{figure}[t]
\begin{centering}
\includegraphics[scale=.6]{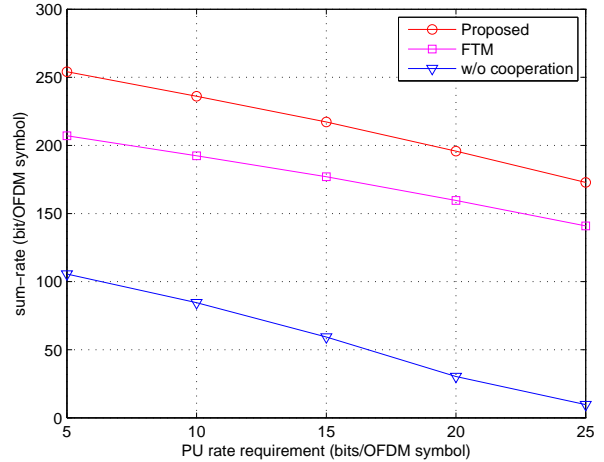}
\vspace{-0.1cm}
 \caption{Sum-rate of the secondary system versus PU rate requirement, with $K_P=2$ PU pairs, $K_S=4$ SUs and peak power is $10$ dB per subcarrier.}\label{fig:qos}
\end{centering}
\vspace{-0.3cm}
\end{figure}

We next study the sum-rate of the secondary system versus the
different PU rate requirements in Fig.~\ref{fig:qos}, where we fix
transmit SNR $10$ dB and $K_S=4$ SUs. As expected, our proposed
scheme outperforms the FTM and conventional schemes considerably
over all ranges of PU rate requirements. This further demonstrates
the effectiveness of the proposed scheme.

\section{Conclusion}

This paper studied the OFDMA-based bidirectional CRNs with
cooperation between the primary and secondary systems, for
supporting communication services with diverse QoS requirements. We
proposed an optimization framework for joint optimization of
bidirectional transmission mode selection, SU selection, subcarrier
assignment, power control, and time slot allocation. We converted
this mix integer programming problem with exponential complexity
into a convex problem using the dual decomposition method and
developed efficient algorithms with polynomial complexity.

A few important conclusions have been made throughout this paper.
%
Firstly, the time slot allocation between a PU and a SU on a
cooperated subcarrier is binary. Secondly, the proposed framework
can greatly improve the total throughput of the secondary system by
about $60\%$, compared with the non-cooperative scheme. Thirdly,
choosing the appropriate transmission modes for the PUs is
necessary. Last but not least, transmission mode adaptation and SU
selection over different subcarriers can enhance the total
throughput by about $20\%$.

The proposed algorithm can be used as the performance upper bound for
suboptimal or distributed algorithms. In future work, it will be interesting to investigate incentive-based
distributed schemes.

\appendices
%

\section{Proof of Proposition~\ref{pro:gradients}}\label{app:gradients}
By definition of $g(\boldsymbol{\lambda},\boldsymbol{\beta})$ in
(\ref{eqn:dualf}), we have
\begin{eqnarray}
g(\boldsymbol{\lambda}',\boldsymbol{\beta}')&\geq&\sum_{k=K_P+1}^K\sum_{n=1}^N
R_{k,n}^* + \sum_{k=1}^K\lambda_k'\left(P_k^{{\rm
max}}-\sum_{n=1}^NP_{k,n}^*\right) \nonumber\\
&+&\sum_{k=1}^{K_P}\beta_k'\left(\sum_{n=1}^N R_{k,n}^*- r_k\right)
\nonumber\\
&=&g(\boldsymbol{\lambda},\boldsymbol{\beta})+\sum_{k=1}^K(\lambda_k'-\lambda_k)\left(P_k^{{\rm
max}}-\sum_{n=1}^NP_{k,n}^*\right)\nonumber\\
 &+&\sum_{k=1}^{K_P}(\beta_k'-\beta_k)\left(\sum_{n=1}^N R_{k,n}^*-
r_k\right).
\end{eqnarray}
Hence, Proposition~\ref{pro:gradients} is proven by using the
definition of subgradient.

\section{Proof of Proposition~\ref{pro:twoway}}\label{app:twoway}
Both $P_{  P_2}^*$ and $P_{  P_2}^*$ must be
positive if two-way relaying occurs, besides $P_  S^*$ is
positive. We first investigate $P_{  P_2}^*$ in
(\ref{eqn:tw-p2}). It is easy to observe that
$\gamma_1\lambda_{  P_2}$ must be greater than
$\gamma_2\lambda_{  P_1}$, otherwise the first term in
(\ref{eqn:tw-p2}) is negative, and thus, $P_{  P_2}^*=0$.
For $P_{  P_2}^*$ in (\ref{eqn:tw-p1}), we let the first
term is greater than that of the second term, i.e.,
$\alpha_2'/\lambda_{
P_1}>(\alpha_1'-\alpha_2')\gamma_2/(\gamma_1\lambda_{
P_2}-\gamma_2\lambda_{  P_1})$. After some manipulations,
we obtain $\gamma_1\lambda_{
P_2}/\gamma_2\lambda_{  P_1}>\alpha_1'/\alpha_2'$.
Combining the condition $\alpha_1'/\alpha_2'\geq1$, we obtain
$\gamma_1\lambda_{  P_2}>\gamma_2\lambda_{  P_1}$.
This completes the proof.

\bibliographystyle{IEEEtran}
\bibliography{IEEEabrv,CR}

\begin{thebibliography}{10}
\providecommand{\url}[1]{#1}
\csname url@samestyle\endcsname
\providecommand{\newblock}{\relax}
\providecommand{\bibinfo}[2]{#2}
\providecommand{\BIBentrySTDinterwordspacing}{\spaceskip=0pt\relax}
\providecommand{\BIBentryALTinterwordstretchfactor}{4}
\providecommand{\BIBentryALTinterwordspacing}{\spaceskip=\fontdimen2\font plus
\BIBentryALTinterwordstretchfactor\fontdimen3\font minus
  \fontdimen4\font\relax}
\providecommand{\BIBforeignlanguage}[2]{{%
\expandafter\ifx\csname l@#1\endcsname\relax
\typeout{** WARNING: IEEEtran.bst: No hyphenation pattern has been}%
\typeout{** loaded for the language `#1'. Using the pattern for}%
\typeout{** the default language instead.}%
\else
\language=\csname l@#1\endcsname
\fi
#2}}
\providecommand{\BIBdecl}{\relax}
\BIBdecl

\bibitem{Haykin}
S.~Haykin, ``Cognitive radios: Brain-empowered wireless communications,''
  \emph{IEEE J. Sel. Areas Commun.}, vol.~23, no.~2, pp. 201--220, 2005.

\bibitem{Akyildiz}
I.~F. Akyildiz, W.-Y. Lee, M.~C. Vuran, and S.~Mohanty, ``Next
  generation/dynamic spectrum access/cognitive radio wireless networks: A
  survey,'' \emph{Comput. Netw.}, vol.~50, no.~13, pp. 2127--2159, Sep. 2006.

\bibitem{Simeone}
O.~Simeone, I.~Stanojev, S.~Savazzi, Y.~Bar-Ness, U.~Spagnolini, and
  R.~Pickholtz, ``Spectrum leasing to cooperating secondary ad hoc networks,''
  \emph{IEEE J. Sel. Areas Commun.}, vol.~26, no.~1, pp. 203--213, Jan. 2008.

\bibitem{Zhang}
J.~Zhang and Q.~Zhang, ``Stackelberg game for utility-based cooperative
  cognitive radio networks,'' in \emph{Proc. ACM MobiHoc}, 2009.

\bibitem{Weiss}
T.~A. Weiss and F.~K. Jondral, ``Spectrum pooling: An innovative strategy for
  the enhancement of spectrum efficiency,'' \emph{IEEE Commun. Mag.}, vol.~42,
  no.~3, pp. S8--S14, Mar. 2004.

\bibitem{Wang}
P.~Wang, M.~Zhao, L.~Xiao, S.~Zhou, and J.~Wang, ``Power allocation in
  {OFDM}-based cognitive radio systems,'' in \emph{Proc. IEEE GLOBECOM}, 2007,
  pp. 4061--4065.

\bibitem{Bansal}
G.~Bansal, M.~J. Hossain, and V.~K. Bhargava, ``Optimal and suboptimal power
  allocation schemes for {OFDM}-based cognitive radio systems,'' \emph{IEEE
  Trans. Wireless Commun.}, vol.~7, no.~11, pp. 4710--4718, Nov. 2008.

\bibitem{Ma}
Y.~Ma, D.~I. Kim, and Z.~Wu, ``Optimization of {OFDMA}-based cellular cognitive
  radio networks,'' \emph{IEEE Trans. Commun.}, vol.~58, no.~8, pp. 2265--2271,
  Aug. 2010.

\bibitem{Rankov}
B.~Rankov and A.~Wittneben, ``Spectral efficient protocols for half-duplex
  fading relay channels,'' \emph{IEEE J. Sel. Areas Commun.}, vol.~25, no.~2,
  pp. 379--389, Feb. 2007.

\bibitem{Popovski}
P.~Popovski and H.~Yomo, ``Physical network coding in two-way wireless relay
  channels,'' in \emph{Proc. IEEE ICC}, 2007, pp. 707--712.

\bibitem{Kim}
S.~J. Kim, P.~Mitran, and V.~Tarokh, ``Performance bounds for bi-directional
  coded cooperation protocols,'' \emph{IEEE Trans. Inf. Theory}, vol.~54,
  no.~11, pp. 5253--5241, Aug. 2008.

\bibitem{Stevenson}
C.~Stevenson, G.~Chouinard, Z.~Lei, W.~Hu, S.~Shellhammer, and W.~Caldwell,
  ``{IEEE} 802.22: The first cognitive radio wireless regional area network
  standard,'' \emph{IEEE Commun. Mag.}, vol.~47, no.~1, pp. 130--138, Jan.
  2009.

\bibitem{Brik}
V.~Brik, E.~Rozner, S.~Banerjee, and P.~Bahl, ``{DSAP}: A protocol for
  coordinated spectrum access,'' in \emph{Proc. IEEE DySPAN}, 2005, pp.
  611--614.

\bibitem{Jia}
J.~Jia, J.~Zhang, and Q.~Zhang, ``Cooperative relay for cognitive radio
  networks,'' in \emph{Proc. IEEE INFOCOM}, 2009, pp. 2304--2312.

\bibitem{Sun}
C.~Sun and K.~B. Letaief, ``User cooperation in heterogeneous cognitive radio
  networks with interference reduction,'' in \emph{Proc. IEEE ICC}, 2008.

\bibitem{Hossain}
L.~B. Le and E.~Hossain, ``Resource allocation for spectrum underlay in
  cognitive radio networks,'' \emph{IEEE Trans. Wireless Commun.}, vol.~7,
  no.~12, pp. 5306--5315, Dec. 2008.

\bibitem{Hoang}
A.~T. Hoang, Y.-C. Liang, and M.~H. Islam, ``Power control and channel
  allocation in cognitive radio networks with primary users' cooperation,''
  \emph{IEEE Trans. Mobile Comput.}, vol.~9, no.~3, pp. 348--360, Mar. 2010.

\bibitem{Zhao}
G.-D. Zhao, C.-Y. Yang, G.~Y. Li, D.-D. Li, and A.~C.~K. Soong, ``Power and
  channel allocation for cooperative relay in cognitive radio networks,''
  \emph{IEEE J. Sel. Topics Signal Proc.}, vol.~5, no.~1, pp. 151--159, Feb.
  2011.

\bibitem{Kang}
X.~Kang, H.~K. Garg, Y.~C. Liang, and R.~Zhang, ``Optimal power allocation for
  {OFDM}-based cognitive radios with new primary transmission protection
  criteria,'' \emph{IEEE Trans. Wireless Commun.}, vol.~9, no.~6, pp.
  2066--2075, Jun. 2010.

\bibitem{Yi}
Y.~Yi, J.~Zhang, Q.~Zhang, T.~Jiang, and J.~Zhang, ``Cooperative
  communication-aware spectrum leasing in cognitive radio networks,'' in
  \emph{Proc. IEEE DySPAN}, Apr. 2010.

\bibitem{YuTCOM}
W.~Yu and R.~Lui, ``Dual methods for nonconvex spectrum optimization of
  multicarrier systems,'' \emph{IEEE Trans. Commun.}, vol.~54, no.~7, pp.
  1310--1322, Jul. 2006.

\bibitem{Yu}
T.~C.-Y. Ng and W.~Yu, ``Joint optimization of relay strategies and resource
  allocations in cooperative cellular networks,'' \emph{IEEE J. Sel. Areas
  Commun.}, vol.~25, no.~2, pp. 328--339, Feb. 2007.

\bibitem{Boyd}
S.~Boyd and L.~Vandenberghe, \emph{Convex Optimization}.\hskip 1em plus 0.5em
  minus 0.4em\relax U.K.: Cambridge Univ. Press, 2004.

\bibitem{ellipsoid}
R.~G. Bland, D.~Goldfarb, and M.~J. Todd, ``The ellipsoid method: A survey,''
  \emph{Oper. Res.}, vol.~29, no.~6, pp. 1039--1091, Nov.-Dec., 1981.

\bibitem{YuanTWC}
Y.~Liu, M.~Tao, B.~Li, and H.~Shen, ``Optimization framework and graph-based
  approach for relay-assisted bidirectional {OFDMA} cellular networks,''
  \emph{IEEE Trans. Wireless Commun.}, vol.~9, no.~11, pp. 3490--3500, Nov.
  2010.

\bibitem{Jitvanichphaibool}
K.~Jitvanichphaibool, R.~Zhang, and Y.~C. Liang, ``Optimal resource allocation
  for two-way relay-assisted {OFDMA},'' \emph{IEEE Trans. Veh. Technol.},
  vol.~58, no.~7, pp. 3311--3321, Sep. 2009.

\bibitem{YuanTCOM}
Y.~Liu and M.~Tao, ``Optimal channel and relay assignment in {OFDM}-based
  multi-relay multi-pair two-way communication networks,'' \emph{IEEE Trans.
  Commun.}, vol.~60, no.~2, pp. 317--321, Feb. 2012.

\bibitem{YuanWCL12}
H.~Zhang, Y.~Liu, and M.~Tao, ``Resource allocation with subcarrier pairing in
  {OFDMA} two-way relay networks,'' \emph{IEEE Wireless Commun. Lett.}, vol.~1,
  no.~2, pp. 61--64, Apr. 2012.

\bibitem{YuanGC12}
Y.~Liu, J.~Mo, and M.~Tao, ``{QoS}-aware policies for {OFDM} bidirectional
  transmission with decode-and-forward relaying,'' in \emph{Proc. IEEE
  GLOBECOM}, 2012.

\bibitem{YuanJSAC12}
C.~Lin, Y.~Liu, and M.~Tao, ``Cross-layer optimization of two-way relaying for
  statistical {QoS} guarantees,'' \emph{IEEE J. Sel. Areas Commun.}, to appear.

\bibitem{Xie}
L.-L. Xie, ``Network coding and random binning for multi-user channels,'' in
  \emph{Proc. IEEE CWIT}, 2007.

\bibitem{Boche}
T.~J. Oechtering, C.~Schnurr, I.~Bjelakovic, and H.~Boche, ``Broadcast capacity
  region of two-phase bidirectional relaying,'' \emph{IEEE Trans. Inf. Theory},
  vol.~54, no.~1, pp. 454--458, Jan. 2008.

\bibitem{Tse}
D.~N.~C. Tse and S.~V. Hanly, ``Multiaccess fading channels-{Part I}:
  Polymatroid structure, optimal resource allocation and throughput
  capacities,'' \emph{IEEE Trans. Inf. Theory}, vol.~44, no.~7, pp. 2796--2815,
  Nov. 1998.

\bibitem{Erceg}
V.~Erceg, L.~Greenstein, S.~Tjandra, S.~Parkoff, A.~Gupta, B.~Kulic, A.~Julius,
  and R.~Jastrzab, ``An empirically based path loss model for wireless channels
  in suburban environments,'' \emph{IEEE J. Sel. Areas Commun.}, vol.~17,
  no.~7, pp. 1205--1211, Jul. 1999.

\end{thebibliography}

\end{document}